\def\be{\begin{equation}}
\def\ee{\end{equation}}
\def\bea{\begin{eqnarray}}
\def\eea{\end{eqnarray}}
\def\ben{\begin{equation*}}
\def\een{\end{equation*}}
\def\bean{\begin{eqnarray*}}
\def\eean{\end{eqnarray*}}
\def\bma{\begin{mathletters}}
\def\ema{\end{mathletters}}
\def\bi{\begin{itemize}}
\def\ei{\end{itemize}}
\newtheorem{thm}{Theorem}
\newtheorem{cor}[thm]{Corollary}
\newtheorem{lem}{Lemma}[thm]
\newcommand{\ket}[1]{ | \, #1 \rangle}
\begin{document}

\title{Quantum secret sharing based on local distinguishability}
\author{Ramij Rahaman}
\email{ramijrahaman@gmail.com}\affiliation{Department of Mathematics, University of Allahabad, Allahabad-211002, U.P., India} \affiliation{Institute of
Theoretical Physics \& Astrophysics, University of Gda\'{n}sk,
80-952 Gda\'{n}sk, Poland}
\affiliation{Department of Informatics, University of Bergen, Post Box-7803, 5020, Bergen,
Norway}

\author{Matthew G. Parker}
\affiliation{Department of Informatics, University of Bergen, Post Box-7803, 5020, Bergen,
Norway}

\begin{abstract}
In this paper we analyze the (im)possibility of the exact distinguishability of orthogonal multipartite entangled states
under {\em restricted local operation and classical communication}. Based on this local distinguishability analysis we propose a new scheme for quantum secret sharing (QSS). Our QSS scheme is quite general and cost efficient compared to other schemes. In our scheme no joint quantum operation is needed to reconstruct the secret. We also present an interesting $(2,n)$-threshold QSS scheme, where any two cooperating players, one from each of two disjoint groups of players, can always reconstruct the secret. This QSS scheme is quite uncommon, as most $(k,n)$-threshold schemes have the restriction $k\geq\lceil\frac{n}{2}\rceil$.
\end{abstract}
\maketitle
\noindent{\it Keywords}: quantum secret sharing; multipartite entanglement; local quantum operation; local distinguishability; local stabilizer.

\section{Introduction}

Classical secret sharing (CSS) is a cryptographic protocol in which a dealer splits the secret and distributes shares to various shareholders. The secret can be reconstructed only when a sufficient number of shareholders cooperate with each other by sharing their individual parts of the secret. %In a $(k,n)$ threshold CSS scheme, the dealer accomplishes a secret by giving each n-player a share in such a way that any group of k (for threshold) or more players can together reconstruct the secret but no group of fewer than k players can.
The CSS scheme was invented independently by Adi Shamir \cite{Sha79} and George Blakley \cite{Bla79} in 1979. The main drawback with all existing CSS schemes is that they are not perfectly secure from eavesdropper attack. To defeat the eavesdropper attack perfectly, Hillery {\em et.al.} \cite{HBB99}, and Cleve {\em et.al.} \cite{CGL99}, simultaneously presented the quantum secret sharing (QSS) scheme in 1999, whereas \.{Z}ukowski et.al. previously mentioned this type of scheme in a different context \cite{ZZHW98}. In a $(k,n)$-threshold QSS scheme, the dealer encodes her secret within a multipartite quantum state then distributes the shares of the quantum state to n pairwise distant parties. Any group of k or more parties can collaboratively reconstruct the secret from their shares while no group of fewer than k parties can. \\

Several works have considered the area of QSS \cite{Got00,NMI01,SS05,ZLM05,ZM05}. Although these QSS schemes are highly secure, either they are very restricted or they require more than one particle measurement {\em i.e.}, joint quantum measurement to reveal the secret. To perform a joint quantum operation one must bring the relevant subsystems into one place, which is, practically, quite expensive due to high quantum noise for a large number of players. Recently, Fortescue and Gour \cite{FG12} proposed a QSS scheme where they partially reduced the required quantum communication at the cost of some classical communication. But their scheme still requires some large joint quantum operations to reveal the secret. To deal with this joint measurement problem completely, we here demonstrate a simple but very efficient construction of a perfect $(k,n)$-threshold QSS scheme based on only local quantum operations and classical communication. We encode our secret into several pairs of locally distinguishable orthogonal multipartite entangled states and distribute different particles to different parties depending on the context. Only a sufficient number of cooperating parties can distinguish these pairs of orthogonal entangled states by local operation and thereby reconstruct the secret as well. Another restriction for most $(k,n)$-threshold QSS schemes is that they work only if $k\geq\lceil\frac{n}{2}\rceil$ \cite{CGL99,SS05}. In this context we provide a restricted $(2,n)$ threshold QSS scheme, where any two cooperating parties from two disjoint groups, taken over all parties, can always reconstruct the secret. %Another feature of importance of this scheme is, even a very large number of dishonest parties unable to destroy the secret. %, two honest cooperative parties from two disjoint groups of all parties can always reconstruct the secret.
 This type of feature is quite uncommon and not present in the existing QSS schemes.\\

Our QSS scheme is largely based on local distinguishability of a pair of orthogonal entangled states so, before describing our QSS scheme, we first discuss several local distinguishability problems that are relevant for our QSS scheme. Then we present our QSS scheme for various thresholds and also discuss how our scheme is perfectly secure from eavesdropper attack. We end with a conclusion.

\section{Local distinguishability of symmetric states}
The question of local discrimination of a set of multipartite orthogonal states is: a number of parties share a quantum system prepared in one of a known set of mutually orthogonal quantum states. The goal of all the parties is to determine the state in which the quantum system was prepared using local operations and classical communication (LOCC) \cite{WSHV00,GKRSS01,BGK11}. We now discuss several distinguishability problems related to an orthogonal pair of $n$-qubit symmetric states {\it i.e.}, GHZ and Dicke states under {\em restricted local operation and classical communication (rLOCC)}. Here, `{\em rLOCC}' means only a subset of the parties are allowed to communicate with each other.

\subsection{Local distinguishability of GHZ states}
The canonical representation of an $n(\geq 2)$-qubit GHZ state can be
written as (up to a global phase and proper basis transformation):
\bean
\label{nGHZs}
\ket{GHZ(i_1,i_2,i_3,.....,i_n)}_{123....n} =\frac{1}{\sqrt{2}}\left[\ket{0~i_2~i_3~......i_n} ~+~
(-1)^{i_1} \ket{1~\overline{i_2}~ \overline{i_3}.......\overline{i_3}}\right],
\eean

where $i_1, i_2, i_3,.....i_n \in \{ 0, 1\}$, and
the bar over a bit value indicates its logical negation. For $n=2$, these states are generally known as `\emph{Bell states}' or two qubit maximally entangled states.\\

Let's define a pair of orthogonal distance-$r (0\leq r\leq\lfloor\frac{n}{2}\rfloor)$, $n$-qubit GHZ states as follows:
\bea\label{GHZr}
\ket{GHZ}&=&\frac{1}{\sqrt{2}}\left[\ket{\overset{r}{\overbrace{000.....00}}~\overset{n-r}{\overbrace{000.....00}}}+\ket{111.....11~111.....11}\right],\nonumber\\
\ket{GHZ}_r&=&\frac{1}{\sqrt{2}}\left[\ket{\overset{r}{\overbrace{111.....11}}~\overset{n-r}{\overbrace{000.....00}}}-\ket{000.....00~111.....11}\right],
\eea
where r is a non-negative integer. If $r>\frac{n}{2}$, then we consider $n-r$ as the distance for the above pair (\ref{GHZr}).\\

\begin{thm}\label{dis-0}
No orthogonal pair of distance-$0$ $n$-qubit GHZ states can be distinguished by any less than n cooperating parties.
\end{thm}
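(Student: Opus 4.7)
The plan is to reduce the local distinguishability question to a statement about reduced density matrices. Under rLOCC with a cooperating subset $S\subsetneq\{1,2,\ldots,n\}$, the non-cooperating parties perform no operations and share no classical information with $S$, so all measurement statistics available to $S$ depend only on the reduced state $\rho_S=\mathrm{Tr}_{\bar S}(\rho)$ of the global state on the subsystems held by $S$. If we can show that the two candidate global states have identical reductions on every proper subset $S$, then any LOCC protocol run among $S$ produces identical outcome distributions in the two cases, and hence cannot distinguish them better than random guessing.

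First I would fix notation: write $|\psi_{\pm}\rangle=\tfrac{1}{\sqrt 2}(|0\rangle^{\otimes n}\pm|1\rangle^{\otimes n})$ for the orthogonal distance-$0$ pair, and let $S\subsetneq\{1,\ldots,n\}$ with $|S|=k<n$. Next I would explicitly compute
\[
|\psi_{\pm}\rangle\langle\psi_{\pm}|=\tfrac{1}{2}\bigl(|0\rangle\langle 0|^{\otimes n}+|1\rangle\langle 1|^{\otimes n}\pm|0\rangle\langle 1|^{\otimes n}\pm|1\rangle\langle 0|^{\otimes n}\bigr),
\]
and trace out the (nonempty) complement $\bar S$. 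The diagonal terms $|0\rangle\langle 0|^{\otimes n}$ and $|1\rangle\langle 1|^{\otimes n}$ survive as $|0\rangle\langle 0|^{\otimes k}$ and $|1\rangle\langle 1|^{\otimes k}$, while the off-diagonal terms vanish because $\langle 0|1\rangle=0$ on at least one traced-out site. Thus
\[
\mathrm{Tr}_{\bar S}\bigl(|\psi_{+}\rangle\langle\psi_{+}|\bigr)=\mathrm{Tr}_{\bar S}\bigl(|\psi_{-}\rangle\langle\psi_{-}|\bigr)=\tfrac{1}{2}\bigl(|0\rangle\langle 0|^{\otimes k}+|1\rangle\langle 1|^{\otimes k}\bigr),
\]
so the reduced states on any proper subset are identical.

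Finally I would invoke the standard fact that if two global states have equal reductions on the set of cooperating parties, then no LOCC strategy restricted to that set can distinguish them: every local Kraus operator acts trivially on $\bar S$, and the probability of any classical transcript factorizes through $\rho_S$. Applying this with $k=n-1$ (and hence with any smaller $k$) yields the theorem. The only place where one must be a little careful is the meaning of rLOCC itself: I would explicitly note that the non-cooperating parties neither measure nor broadcast, so their subsystems may be treated as traced out from the viewpoint of $S$. Once this point is spelled out, the rest is a one-line calculation; I do not anticipate any serious technical obstacle.
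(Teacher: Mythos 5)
Your proof is correct, but it follows a genuinely different route from the paper's. The paper argues by reduction: it places the $m$ cooperating qubits in one lab and the remaining $n-m$ in another, observes that across this cut the distance-$0$ pair becomes the Bell pair $\ket{\Phi^{\pm}}$, and then invokes a separate lemma --- proved by a no-signaling argument --- that Bell states cannot be distinguished by local operations alone without classical communication between the two labs. You instead compute the reduced density matrix of both states on an arbitrary proper subset $S$ of the parties, note that the off-diagonal terms $\kb{0}{1}^{\otimes n}$ and $\kb{1}{0}^{\otimes n}$ are killed by the partial trace over the nonempty complement, and conclude that $\rho_S$ is literally identical for the two states; hence every protocol confined to $S$ has identical outcome statistics in the two cases. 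Your version is more self-contained (no auxiliary lemma, no appeal to relativistic signaling) and in fact proves something strictly stronger: the $m<n$ parties cannot distinguish the pair even if they pool their qubits and perform arbitrary joint quantum operations, and they cannot succeed even with probability above chance, whereas the paper's argument is phrased for perfect LOCC discrimination. What the paper's route buys is an explicit link to the Bell-state discrimination story and to the role of classical communication (its Lemma~\ref{bellDis}), which it leans on conceptually in the cryptographic discussion. The one modelling point you flag --- that the non-cooperating parties neither measure nor broadcast, so their subsystems may be traced out from the viewpoint of $S$ --- is precisely the assumption the paper encodes in its ``two labs with no communication between them'' necessary condition; both proofs rest on it, and your treatment of it is adequate.
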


\begin{lem}\label{bellDis}
Classical communication is necessary to distinguish any pair of Bell states locally.
\end{lem}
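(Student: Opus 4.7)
The plan is to argue that no protocol without classical communication can produce outcome statistics that differ between any two Bell states, so perfect discrimination must be impossible. The starting observation is the well-known fact that all four Bell states $\ket{\Phi^{\pm}},\ket{\Psi^{\pm}}$ have the same single-qubit marginals: tracing out either party leaves the maximally mixed state $I/2$. Hence any local POVM $\{M_a\}$ that Alice applies (possibly after adjoining an ancilla and performing a CPTP map) produces the outcome distribution $\Pr(a)=\operatorname{Tr}[M_a\,(I/2)]$, which does not depend on which Bell state was prepared; the same holds for Bob.

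Next, I would formalise a no-communication LOCC strategy as a pair of independent local procedures: Alice applies some instrument and outputs a guess $g_A$ based solely on her own outcome, while Bob independently outputs a guess $g_B$ based on his. Since the two sides never exchange messages, the joint probability of $(g_A,g_B)$ factorises as a product of marginals, each depending only on Alice's or Bob's reduced state. Both marginals being $I/2$, the joint distribution of guesses is identical for every Bell state, and hence for any two of them. No decision rule applied to $(g_A,g_B)$ can therefore distinguish the two states, which proves that classical communication is indispensable.

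The main obstacle is to make sure the argument really covers every conceivable ``local'' protocol: arbitrary local CPTP maps with ancillas, several rounds of local operations, and strategies that branch adaptively on earlier local outcomes. This is handled by the same marginal argument, since any such purely local procedure on Alice's side induces statistics determined solely by $\rho_A=I/2$, and the tensor-product structure of independent local operations guarantees factorisation of the joint distribution. Once this generality is nailed down, the conclusion is immediate, and the lemma drops out.
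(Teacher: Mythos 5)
Your first paragraph contains the right ingredient: every Bell state has reduced density matrix $I/2$ on each side, so any purely local procedure on Alice's side (ancillas, CPTP maps, adaptive branching included) yields outcome statistics $\operatorname{Tr}[M_a(I/2)]$ that are independent of which Bell state was prepared, and likewise for Bob. That alone proves the lemma, and it is in fact the direct, unpacked version of the paper's argument: the paper instead reduces to the no-signaling principle by observing that Bob can toggle $\ket{\Phi^+}\leftrightarrow\ket{\Phi^-}$ with $I$ versus $\sigma_z$, so a communication-free discrimination by Alice would let her read Bob's bit. Your marginal computation is essentially the standard proof of that no-signaling step, and it has the minor advantage of covering all pairs of Bell states uniformly rather than one representative pair up to local unitaries.

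However, your second paragraph contains a genuine error. The claim that ``since the two sides never exchange messages, the joint probability of $(g_A,g_B)$ factorises as a product of marginals'' is false for entangled states: local measurements on a shared entangled state produce correlated outcomes whether or not the parties talk to each other. Concretely, measuring $\sigma_x\otimes\sigma_x$ gives perfectly correlated outcomes on $\ket{\Phi^+}$ and perfectly anticorrelated outcomes on $\ket{\Phi^-}$, so the joint distribution of $(g_A,g_B)$ does depend on the state --- indeed this is exactly how the states \emph{are} distinguished once classical communication is permitted. Your conclusion ``no decision rule applied to $(g_A,g_B)$ can distinguish the two states'' is therefore wrong, and fortunately it is also not the statement you need: applying a decision rule to the pair $(g_A,g_B)$ already presupposes that someone has collected both outcomes, i.e.\ that classical communication has occurred. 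The correct way to finish is to observe that in a communication-free protocol the final guess must be announced by a single party as a function of that party's local record alone, and by your first paragraph that record has a state-independent distribution; hence no such protocol can succeed with probability better than chance. With the factorisation claim deleted and replaced by this observation, the proof is sound.
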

\begin{proof}
Without any loss of generality, let Alice and Bob share the following pair of Bell states,
\bea\label{bellpair}
\ket{\Phi^{\pm}}&=&\frac{1}{\sqrt{2}}[\ket{0}_A\ket{0}_B \pm \ket{1}_A\ket{1}_B].
\eea

Their goal is to distinguish the above pair of Bell states by only local operations (LO)
%(OP)
on their respective qubits and they are not allowed to communicate, classically, with each other. \\
%This is an impossible task as that would immediately imply signaling, as we now show.\\

Let Alice and Bob be spatially separated, and share the known Bell state $\ket{\Phi^{+}}$.
%=\frac{1}{\sqrt{2}}[\ket{0}_A\ket{0}_B + \ket{1}_A\ket{1}_B]$.
Bob applies $\mathbb{I}$ or $\mathbb{\sigma}_z$ on his qubit to communicate the message $0$ or $1$ respectively, and the desired state may change to an another orthogonal Bell state as,
\bea
(\mathbb{I}^A\otimes \mathbb{I}^B)\ket{\Phi^{+}}&=&\frac{1}{\sqrt{2}}[\ket{0}_A\ket{0}_B + \ket{1}_A\ket{1}_B],\nonumber\\
(\mathbb{I}^A\otimes {\mathbb{\sigma}_z}^B)\ket{\Phi^{+}}&=&\frac{1}{\sqrt{2}}[\ket{0}_A\ket{0}_B - \ket{1}_A\ket{1}_B]=\ket{\Phi^{-}}.
\eea

If Alice (alone) is able to distinguish the above pair without any communication from Bob, she can recover Bob's message as well, which is impossible as that would imply signaling \footnote{No message can travel faster than the speed of light in a vacuum.}.\\
\end{proof}

\begin{proof} {\it{ (of Theorem \ref{dis-0}):}}
%{\it{Proof of Theorem \ref{dis-0}:}}
Without any loss of generality, let's assume that the orthogonal GHZ pair of distance-$0$ can be distinguished by the first $m(<n)$ cooperative parties. A necessary condition for distinguishability of a pair of $n$-qubit
states under $m$-cooperative LOCC (where $m$ parties agree to do local operations on their respective qubit and share the measurement outcomes with each other) would be that: ``{\em The
pair of states should remain distinguishable when $m$ out of $n$ qubits are
operated on jointly at one place whereas the remaining $(n-m)$ qubits are kept at a different place and no classical communication is allowed between these two places}".
Keeping the condition in mind, the first $m$ qubits are kept in lab-A and the remaining $(n-m)$ qubits are kept together in a different lab (lab-B). Under this arrangement, any given pair of kind (\ref{GHZ0}), for a proper choice of basis,
can be written in the following bipartite form:
\bea
\ket{GHZ}&=&\frac{1}{\sqrt{2}}\left[\ket{0}_A\ket{0}_B+\ket{1}_A\ket{1}_B\right],\nonumber\\
\ket{GHZ}_0&=&\frac{1}{\sqrt{2}}\left[\ket{0}_A\ket{0}_B-\ket{1}_A\ket{1}_B\right],
\eea
which is equivalent to pair (\ref{bellpair}). Therefore, using Lemma \ref{bellDis} we conclude that without any classical communication between lab-A and lab-B, local distinguishability of the above pair is impossible. Hence, no less than $n$ cooperating parties can distinguish a pair of orthogonal distance-$0$ GHZ states.\\
\end{proof}

\begin{thm}\label{dis-r}
An orthogonal pair of distance-$r(\geq 1)$ GHZ states (\ref{GHZr}) can always be exactly distinguished by any two cooperating LOCC parties, one from the first $r$ and the other from the last $n-r$ parties.
\end{thm}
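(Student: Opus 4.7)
The plan is to exhibit a minimal two-party LOCC protocol: let Alice be any party holding one of the first $r$ qubits and Bob any party holding one of the last $n-r$ qubits, while the remaining $n-2$ parties take no action at all. The key structural observation from (\ref{GHZr}) is that within each of the two blocks every qubit carries the same bit in each summand, so a single computational-basis measurement on any one qubit of a block already reveals that block's bit. What differs between $\ket{GHZ}$ and $\ket{GHZ}_r$ is only how the two block-bits are correlated, and it is exactly this correlation that Alice and Bob can extract with one round of classical communication.

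Concretely, I would have Alice and Bob each perform a $\sigma_z$ measurement, obtaining outcomes $a$ and $b$, and then exchange these bits classically. Reading off (\ref{GHZr}) term by term, for $\ket{GHZ}$ the possible outcomes are $(a,b)\in\{(0,0),(1,1)\}$, whence $a\oplus b=0$, while for $\ket{GHZ}_r$ they are $(a,b)\in\{(1,0),(0,1)\}$, whence $a\oplus b=1$. Thus computing the XOR of the two exchanged bits perfectly identifies which state of the pair was prepared. Equivalently, tracing out all but Alice's and Bob's qubits yields the diagonal mixtures $\tfrac{1}{2}(\proj{00}+\proj{11})$ and $\tfrac{1}{2}(\proj{01}+\proj{10})$, whose supports are orthogonal and separated by a product $\sigma_z\otimes\sigma_z$ measurement; no Bell-style coherent measurement is needed, in sharp contrast with the $r=0$ situation handled by Theorem~\ref{dis-0} via Lemma~\ref{bellDis}.

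I do not anticipate any serious obstacle in this argument; it is essentially a direct inspection of the two summands once the right bipartition is chosen. The only cosmetic care point is that Alice and Bob need not be qubits $1$ and $r{+}1$ specifically: permuting qubits within a block is a symmetry of both states in (\ref{GHZr}), so any choice of one party from each block works. The convention recorded just after (\ref{GHZr}), namely that $n-r$ is used as the effective distance when $r>n/2$, together with the hypothesis $r\ge 1$, guarantees that both blocks are nonempty, so the protocol applies uniformly to every admissible $r$.
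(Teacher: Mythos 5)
Your proposal is correct and is essentially identical to the paper's own proof: both have the two parties measure their qubits in the computational basis ($\sigma_z$) and compare outcomes classically, concluding $\ket{GHZ}$ on agreement and $\ket{GHZ}_r$ on disagreement. The extra remarks on reduced states and block-permutation symmetry are fine but do not change the argument.
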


\begin{proof}
The proof is very simple. Both cooperating parties (one from the first $r$ and the other from the last $n-r$ parties) measure their own qubit in the computational basis (this is $\sigma_z$) locally, and if both of them get the same result, then their shared state was $\ket{GHZ}$, otherwise the state was $\ket{GHZ}_r$.
\end{proof}

\subsection{Local distinguishability of Dicke states}
The $n$-qubit symmetric Dicke state of weight $m(1\leq m<n)$ is defined by,
\be\label{DS}
\ket{m,n}=\frac{1}{\sqrt{n\choose m}}\left[\sum_{j=1}^{n\choose m} P_j(\ket{1_11_2....1_m0_{m+1}....0_n})\right],
\ee
where $\{P_j(\ket{1_11_2....1_m0_{m+1}....0_n})\}$ is the set of all possible distinct permutations
of $m$ 1's and $n-m$ 0's \cite{LLHL09}. For $m=1$ the state
given in (\ref{DS}) is generally called an $n$-qubit W state. Let us define the $n$-qubit generalized Dicke state of weight $m(1\leq m<n)$ by
\be\label{GDS}
\ket{m,n}_G=\sum_{j=1}^{n\choose m}c_jP_j(\ket{1_11_2....1_m0_{m+1}....0_n}),
\ee
where $\sum |c_j|^2=1$. We define the weight difference between two $n$-qubit Dicke states as the distance between them. For our purpose, the distance $r$ is less or equal to $n-2$. %If $r>\lfloor\frac{n}{2}\rfloor$, we consider $n-r$ as the desired distance.
An orthogonal pair of distance-$r(0\leq r \leq n-2)$ generalized Dicke states of $n$-qubit can be written (under appropriate basis transformation) as,
\bea\label{DSp}
\ket{m,n}_G&=&\sum_{j=1}^{n\choose m}c_jP_j(\ket{1_11_2....1_m0_{m+1}....0_n}),\nonumber\\
\ket{m+ r,n}_G&=&\sum_{j=1}^{n\choose {m+r}} c'_j P_j(\ket{1_11_2....1_{m+ r}0_{m+ r+1}....0_n}), \mbox{~~with~}0\leq r \leq n-2, %\lfloor\frac{n}{2}\rfloor,
\eea
where $0< (m +r) <n$ and $\sum_{j=1}^{n\choose m}|c_j|^2=\sum_{j=1}^{n\choose {m+r}} |c'_j|^2=1$. For $r=0$, $\sum_{j=1}^{n\choose m} c^*_jc'_j=0$.

\begin{thm}\label{thdic}
No less than $n-r+1$ cooperating parties can perfectly distinguish any pair of orthogonal $n$-qubit generalized Dicke states (\ref{DSp}) of distance $r(0<r\leq n-2 %\lfloor\frac{n}{2}\rfloor
)$.
\end{thm}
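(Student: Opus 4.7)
The plan is to mirror the bipartite-reduction strategy used in Theorem~\ref{dis-0}. Suppose, for contradiction, that some coalition of $k\le n-r$ cooperating parties can distinguish the pair~(\ref{DSp}) via rLOCC. As in Theorem~\ref{dis-0}, it is enough to derive a contradiction in the strictly stronger bipartite model where these $k$ qubits are grouped into a single lab-A (arbitrary joint operations allowed), the remaining $n-k\ge r$ qubits into a single lab-B (arbitrary joint operations allowed), and no classical communication is permitted between the two labs.

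The central step is a Hamming-weight accounting across the $A|B$ cut. The state $\ket{m,n}_G$ places at most $m$ ones on lab-A, while $\ket{m+r,n}_G$ places at least $(m+r)-(n-k)\ge m$ ones on lab-A. The two weight ranges therefore intersect at the single value $m$ in the tight case $k=n-r$ (for smaller $k$ the overlap is a larger interval, which only strengthens the argument). Conditioning lab-A on this overlap value yields subnormalized components of the form $\ket{\chi_A}\otimes\ket{0\cdots 0}_B$ from $\ket{m,n}_G$ (since the remaining ones are forced onto lab-A, pinning lab-B to $\ket{0^r}_B$ in the tight case) and $\ket{\chi'_A}\otimes\ket{1\cdots 1}_B$ from $\ket{m+r,n}_G$ (since the extra $r$ ones must all sit in lab-B). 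For the symmetric Dicke choice, $\ket{\chi_A}$ and $\ket{\chi'_A}$ are both proportional to the $(n-r)$-qubit symmetric Dicke state $\ket{m,n-r}$; for the generalized form~(\ref{GDS}) they are two vectors living in the same weight-$m$ subspace $\mathcal{H}_A^{(m)}$. Either way, $\mathrm{supp}(\rho_A^{(1)})\cap\mathrm{supp}(\rho_A^{(2)})\supseteq\mathcal{H}_A^{(m)}$ is nontrivial, so lab-A alone cannot discriminate.

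Having isolated this configuration, one concludes by invoking Lemma~\ref{bellDis}. On the overlap branch the $B$-factors are mutually orthogonal flag states $\ket{0^r}$ and $\ket{1^r}$, while the $A$-marginals share support. A local basis change on lab-B sending $\{\ket{0^r},\ket{1^r}\}$ to $\{\ket{+},\ket{-}\}$ reduces the conditional state to a Bell-like pair of the kind featured in Lemma~\ref{bellDis}: lab-B can tell them apart by measuring in a suitable basis, but cannot convey the result to lab-A, and lab-A by itself is blind. Since the overlap branch has strictly positive probability under both states, no no-communication protocol on the $A|B$ partition succeeds with certainty. This contradicts the assumed distinguishability and forces $k\ge n-r+1$.

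The step I expect to be the main obstacle is controlling the generalized coefficients $c_j,c'_j$ cleanly: for arbitrary coefficient patterns the two $A$-side conditional vectors on the overlap branch need not be parallel, and one must verify that no clever POVM on lab-A exploits the induced bias to achieve perfect discrimination. The safest way around this is the support-level argument above: since $\rho_A^{(1)}$ and $\rho_A^{(2)}$ share a nontrivial common support on $\mathcal{H}_A^{(m)}$, they fail to be orthogonal regardless of the precise values of the $c_j$'s, so only correlations with lab-B could conceivably help, and Lemma~\ref{bellDis} then closes the loop uniformly.
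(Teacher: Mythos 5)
Your proposal is correct and follows essentially the same route as the paper: reduce the $k$-party coalition to a lab-A vs.\ lab-B bipartition, then use Hamming-weight accounting across the cut to exhibit a common weight-$m$ (or weight-$k$ when $k\le m$) component in the supports of the two reduced states on lab-A, so that perfect local discrimination fails whenever $k\le n-r$. The only cosmetic difference is how you close: the paper packages the final step as Lemma~\ref{lemdic} (perfect LO discrimination requires orthogonal A-supports, $a_{ij}a'_{il}=0$ for all $i,j,l$), whereas you route through Lemma~\ref{bellDis} and a no-signaling flag-state picture before falling back on the same support-orthogonality criterion.
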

\begin{proof}
Without loss of generality, let us assume that the first $k (<n-r+1)$ cooperating parties can perfectly distinguish the pair of orthogonal $n$-qubit generalized Dicke states (\ref{DSp}) of distance $r(0<r\leq n-2)$. Distinguishing these states by the first $k$ cooperating parties implies distinguishing them with resect to the bipartition $(1,2,....,k)$ vs. $(k+1,k+2,....,n)$ by local operation (LO) only (without any classical communication between these two partitions). Keeping this in mind we put the first k qubits in lab-A and the remaining $(n-k)$ qubits in
lab-B. Under this arrangement, any given pair of type (\ref{DSp}), for a proper choice of
basis, reduces to the form:
\bea\label{paird}
\ket{m,n}_G&=&\sum_{i,j}a_{ij}\ket{e_i}_A\ket{e_j}_B,\nonumber\\
\ket{m+ r,n}_G&=&\sum_{i,j}a'_{ij}\ket{e_i}_A\ket{e_j}_B,
\eea
where $\sum_{i,j}|a_{ij}|^2=1=\sum_{i,j}|a'_{ij}|^2$ and $\{\ket{e_i}_{A(B)}\}$ is an orthonormal basis associated to the joint subsystem $A(B)$. For any $r>0$, $a_{ij}a'_{ij}=0~\forall~i,j$. Let us define two subspaces of the Hilbert space $\mathcal{H}_A$ (associated with the first joint-subsystem $A$), $\mathcal{S}_{m}=\{\ket{e_i}_A;\mbox{~if~}\exists~j~ s.t. ~a_{ij}\neq 0\}$ and $\mathcal{S}_{m+ r}=\{\ket{e_i}_A;\mbox{~if~}\exists~j~ s.t.  ~a'_{ij}\neq 0\}$. Proof of Theorem \ref{thdic} follows immediately from Lemma \ref{lemdic}, stated below.
\begin{lem}\label{lemdic}
The pair of orthogonal bipartite states (\ref{paird}) can be perfectly distinguished under LO only, if $\mathcal{S}_m\perp \mathcal{S}_{m+ r}$ i.e., $a_{ij}a'_{il}=0~\forall~i,j,l$.
\end{lem}
 We now show that $\mathcal{S}_m \not\perp \mathcal{S}_{m+ r}$ for the pair (\ref{paird}) if $k <n-r+1$. Let us first assume that $k \leq m$. Then the product term $\ket{e_{i^*}}_A=\ket{1_11_2...1_{k}}_A\in \mathcal{S}_m\bigcap \mathcal{S}_{m+r}$ for some $i^*\in \{i\}$. Similarly, for $m<k(<n-r+1)$, the product term $\ket{e_{j^*}}_A=\ket{1_11_2...1_{m}0_{m+1}0_{m+2}...0_k}_A\in \mathcal{S}_m$ for some $j^*\in \{i\}$. Since $k<n-r+1$, {\em i.e.}, $k-m\leq n-m-r$, therefore $\ket{e_{j^*}}_A\in\mathcal{S}_{m+r}$ as well. Hence, $\mathcal{S}_m \not\perp \mathcal{S}_{m+ r}$ if $k <n-r+1$. This completes the proof.
\end{proof}

\begin{cor}\label{cordic}
Any $(n-r+1)$ cooperating parties can always perfectly distinguish any pair of orthogonal $n$-qubit generalized Dicke states (\ref{DSp}) of distance-$r(>0)$.
\end{cor}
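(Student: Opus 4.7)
The plan is to give a direct constructive LOCC protocol. Fix any set $S$ of $n-r+1$ cooperating parties and let $\bar S$ be its complement, so $|\bar S|=r-1$. Each party in $S$ measures its own qubit in the computational ($\sigma_z$) basis and broadcasts the outcome classically; the parties then decide based solely on the total number $w$ of $1$'s observed in $S$. The first step is to note that both states in (\ref{DSp}) are superpositions \emph{only} of computational-basis product states of fixed Hamming weight, namely $m$ for $\ket{m,n}_G$ and $m+r$ for $\ket{m+r,n}_G$. Hence the measurement collapses the full system onto a single product state whose total weight is exactly $m$ or exactly $m+r$, depending on which of the two generalized Dicke states was actually prepared.

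Next I would analyze the weight restricted to $S$. Since $|\bar S|=r-1$, the weight carried by $\bar S$ lies in $\{0,1,\dots,r-1\}$. Therefore, if the shared state was $\ket{m,n}_G$, the observed count satisfies $\max(0,m-r+1)\le w\le \min(m,n-r+1)$, while if the shared state was $\ket{m+r,n}_G$, it satisfies $\max(0,m+1)\le w\le \min(m+r,n-r+1)$. The decisive observation is that these two windows are disjoint because the upper endpoint $m$ of the first strictly precedes the lower endpoint $m+1$ of the second. Consequently, $w$ deterministically identifies which of the two generalized Dicke states was prepared, and since this is done via local measurements plus one round of classical broadcasting, the whole procedure is LOCC on the $n-r+1$ cooperating parties.

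The main obstacle, if any, will be merely bookkeeping at the boundary regimes — for instance when $m-r+1<0$, or when $m+r>n-r+1$, or when some of the coefficients $c_j,c'_j$ vanish so that certain weight values are not actually attained. However, such corner cases only shrink the attainable windows without ever threatening disjointness, because the separating inequality $m<m+1$ holds unconditionally and requires no hypothesis on $m$, $r$, or $n$. Combined with Theorem \ref{thdic}, which shows that fewer than $n-r+1$ parties cannot succeed, this completes the tight characterization announced in Corollary \ref{cordic}.
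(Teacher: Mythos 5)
Your proof is correct, and it rests on the same combinatorial heart as the paper's argument: when only $r-1$ qubits are withheld from the cooperating set $S$, the Hamming weight observed on $S$ lies in $[\max(0,m-r+1),\,m]$ for $\ket{m,n}_G$ and in $[m+1,\,m+r]$ for $\ket{m+r,n}_G$, and these windows are disjoint. The difference is in how that observation is deployed. The paper encodes it as orthogonality of the two support subspaces $\mathcal{S}_m$ and $\mathcal{S}_{m+r}$ on the combined subsystem $A$ (i.e., $a_{ij}a'_{il}=0$ for all $i,j,l$) and then invokes Lemma \ref{lemdic}, which asserts abstractly that orthogonal supports on one side suffice for perfect discrimination by local operations; that lemma is stated but not proved in the paper. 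You instead exhibit the discriminating measurement explicitly: each of the $n-r+1$ parties measures $\sigma_z$ and the group thresholds the total count of $1$'s at $m$ versus $m+1$. This yields a fully self-contained, constructive protocol using only single-qubit computational-basis measurements and one round of classical broadcast --- precisely the kind of operation the QSS scheme later asks the players to perform --- whereas the paper's route is shorter but leans on the unproven lemma. Your closing remarks on corner cases are also right: vanishing coefficients or boundary values of $m$, $r$, $n$ only shrink the attainable windows, and disjointness rests solely on $m<m+1$.
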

\begin{proof}
If the combined subsystem, $A$, mentioned in the proof of Theorem \ref{thdic}, contains $(n-r+1)$-qubits, then all the $a_{ij}$'s and $a'_{ij}$'s given in (\ref{paird}) satisfy $a_{ij}a'_{il}=0,~\forall~i,j,l$. The proof then follows immediately from Lemma \ref{lemdic}.
\end{proof}

\begin{cor}\label{cordic1}
Any pair of orthogonal $n$-qubit generalized Dicke states (\ref{DSp}) of distance-$0$ can be perfectly distinguished iff all the $n$ parties cooperate with each other.
\end{cor}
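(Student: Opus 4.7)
The statement is an iff, so both directions need attention. The ``if'' direction is immediate: with all $n$ parties cooperating they can implement the joint projective measurement onto the pair of mutually orthogonal states (together with a reject outcome on the orthogonal complement), which distinguishes them with certainty.

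For the ``only if'' direction, my plan is to specialise the bipartition strategy of Theorem \ref{thdic} to $r=0$. Assume for contradiction that some set of $k<n$ cooperating parties (WLOG the first $k$) can perfectly distinguish the pair. Grouping the remaining $n-k$ parties into a single lab $B$ forbidden from communicating with lab $A$, the supposed distinguishability must persist as LO-only distinguishability across the cut $A|B$. Writing both states in the bipartite form (\ref{paird}) with coefficient families $\{a_{ij}\}$ and $\{a'_{ij}\}$---both supported on the same weight-$m$ sector on $A$, because $r=0$---Lemma \ref{lemdic} now demands $\mathcal{S}_m\perp\mathcal{S}_m'$.

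The real work is to refute that orthogonality. Pick any Hamming weight $s$ in the nonempty interval $[\max(0,m-(n-k)),\min(k,m)]$ (nonempty because $0<m<n$ and $1\le k<n$) and take the $k$-bit string $x=1^{s}0^{k-s}$. Completing $x$ to an $n$-bit weight-$m$ string by any weight-$(m-s)$ pattern on $B$ produces a permutation of $\ket{1^{m}0^{n-m}}$ whose $A$-part is $\ket{x}_A$, and under the paper's implicit convention on generalised Dicke states the coefficient attached to such a permutation is nonzero in both $\ket{m,n}_G$ and $\ket{m,n}_G'$---exactly mirroring the $k\le m$ step in the proof of Theorem \ref{thdic}. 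Hence $\ket{x}_A\in\mathcal{S}_m\cap\mathcal{S}_m'$, contradicting $\mathcal{S}_m\perp\mathcal{S}_m'$ and ruling out $k<n$.

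The main hurdle I anticipate is precisely the last half-sentence: verifying, within whatever genericity convention the paper adopts for ``generalised'' Dicke states, that the coefficient on the singled-out permutation really is nonzero in \emph{both} orthogonal states. This is automatic in the fully generic regime where all $c_j$ and $c_j'$ are nonzero---the regime in which Theorem \ref{thdic} is proved---and once it is in hand Lemma \ref{lemdic} closes the argument. Any pathological non-generic pair in which the coefficients conspire to disjointify the $A$-supports would need a separate soft treatment, but does not affect the structure of the proof.
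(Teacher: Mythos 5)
Your ``only if'' direction is essentially the paper's own argument: specialise the lab-$A$/lab-$B$ bipartition of Theorem \ref{thdic} to $r=0$, invoke Lemma \ref{lemdic} to demand $\mathcal{S}_m\perp\mathcal{S}'_m$, and refute that orthogonality by exhibiting a weight-$s$ string on $A$ that extends to a weight-$m$ string on all $n$ qubits for both states. (The paper simply asserts $a_{ij}a'_{ij}\neq 0$ for all $i,j$ when $k<n$; your version is more explicit, and your closing caveat about non-generic coefficient patterns is an honest flag of a gap the paper itself glosses over --- indeed for degenerate choices of the $c_j,c'_j$ the claim can fail, e.g.\ $c_1\ket{10}$ versus $c'_2\ket{01}$ for $n=2$, $m=1$.)

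The genuine gap is in your ``if'' direction. In this paper ``cooperate'' means the parties remain spatially separated and act by LOCC (this is the entire point of the QSS construction --- no joint quantum operation is ever performed), so the $n$ parties \emph{cannot} ``implement the joint projective measurement onto the pair of mutually orthogonal states'': that is a global measurement, not an LOCC protocol, and for entangled states it is not implementable locally in any obvious way. The sufficiency direction is actually a nontrivial theorem: the paper cites Walgate, Short, Hardy and Vedral \cite{WSHV00}, who proved that \emph{any} two orthogonal multipartite states, however entangled and however distributed, can be perfectly distinguished by LOCC. Your proposal silently substitutes a global measurement for this result, so as written the ``if'' half is not proved in the operational model the corollary lives in. Replacing your first paragraph with an appeal to \cite{WSHV00} closes the gap.
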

\begin{proof}
Proof of the sufficiency part follows from the result that ``{\em two orthogonal states of any quantum system, shared in any proportion between any number of separated parties, can be perfectly distinguished}", as proved in \emph{Ref.} \cite{WSHV00}. The necessary part follows from Lemma \ref{lemdic}, as in this case all the $a_{ij}$'s and $a'_{ij}$'s given in (\ref{paird}) satisfy the relation $a_{ij}a'_{ij}\neq 0,~\forall~ i,j$, for all $k<n$.
\end{proof}

\section{Quantum secret sharing (QSS) scheme}
Suppose Alice wants to share a key between $n$ separated parties. The sender is Alice and the receivers are $\mbox{Bob}_1, \mbox{Bob}_2,...,\mbox{Bob}_n$ and only $k$ or more of the receivers may cooperate to recover the key, {\em i.e.}, we have a $(k, n)$-threshold secret sharing scheme. To implement this scheme Alice does the following steps:\\

 S1. Alice first prepares a large number (say $L>n$) of states chosen randomly from a specified pair of orthogonal $n$-qubit entangled states according to her requirement. Let's denote the prepared states by $\ket{S(a,b_t)}$ to keep details of each prepared state in each {\em run} \footnote{run $t$ is associated with the prepared state $\ket{S(a,b_t)}$ at time $t$.}. Here $a$ represents the state, randomly chosen from a pair of orthogonal states, that Alice prepares at time $t (=1,2,...,L)$, where $b_t=(1_t,2_t,....,n_t)$ represents the positions of all $n$-qubits of a prepared state $\ket{S(a,b_t)}$ at time $t$. \\

 S2. In order to prevent the eavesdropper, Eve, or less than $k$ dishonest Bob$_i$s', from learning the secret, Alice now prepares, at random, a different sequence, $r_i=\Pi_i(1,2,3,....,L)$, for each Bob$_i$, and sends the $i_t$-th qubit to Bob$_i$ according to the $r_i$ sequence order, where $\Pi_i$ is an arbitrary permutation of the sequence $(1,2,3,....,L)$. Note that Alice only sends the qubits not the information about $\Pi_i$. Hence, except Alice no one has the information about $\Pi_i$. It is also interesting to note that the $l$-th qubit of $r_i$ and the $l$-th qubit from $r_j$ in general may not be associated with the same entangled state, $\ket{S(a,b_t)}$, as $r_i\neq r_j$, for $i\neq j$. After receiving their associated sequence of qubits (i.e., $r_i$ for the $i$-th party), all the receivers now share $L$ $n$-qubit entangled states $\ket{S(a,r(b_t))}$. Here $r(b_t)=(\Pi_1(t),\Pi_2(t),....,\Pi_n(t))$. Nobody except Alice has any information about $\ket{S(a,r(b_t))}$.\\

S3.
Alice now randomly selects some run, say $\{t_j\}_{j=1}^u (\subset \{1,2,...,L\})$, and also
computes $n$ arbitrarily chosen permutations, $p_i$ of $\{1,2,...,u\}$, only known to herself. She then
prepares list $C_i=\{(\sigma_i(t_{p_i(j)}),\Pi_i(t_{p_i(j)}))\}_{j=1}^u$ for Bob$_i$ (for $i=1,2,..n$) and sends it to him. It is important to mention that Alice starts to sending lists $C_i$ only if all the receivers confirm the receipt of all their $L$ qubits.
%Alice now randomly selects some run, say $\{t_j\}_{j=1}^u (\subset \{1,2,...,L\})$ and prepares %list $C_i=\{(\sigma_i(t_j),\Pi_i(t_j))\}_{j=1}^u$ for Bob$_i$ (for $i=1,2,..n$) and sends it to %him.
After receiving the list $C_i$, Bob$_i$ measures his $\Pi_i(t_{p_i(j)})$-th qubit in the $\sigma_i(t_{p_i(j)})$ basis and sends the measurement outcome $v_i(t_{p_i(j)})$ to Alice. The details of measurement settings $\sigma_i(t_j)$ are discussed later, case by case.\\ %It is important to mention that the measurement procedure starts only if all the receivers confirm the receipt of all their $L$ qubits. \\

S4. By analyzing the measurement results and associated measurement settings, Alice can easily detect the eavesdropper and, if there is one, she aborts the protocol and starts again with a new set of resources. \\

S5. If no eavesdropper is detected, Alice announces to the respective parties, all qubit positions of an unmeasured state $\ket{S(a,r(b_t))}$. Alice selects this $\ket{S(a,r(b_t))}$ according to her secret $a (=0/1)$. The mapping between classical bit value and orthogonal entangled pair is fixed and is communicated, securely, from Alice to all Bobs in advance. If Alice's secret is more than one bit then she reveals the qubit positions of a sequence of unmeasured states $\ket{S(a,r(b_t))}$. \\

We now discuss, in detail, the choice of measurements (i.e. S3. and S4.) and the choice of states (S1.) for different threshold scenarios.

\subsection{The $(n,n)$ threshold QSS scheme}\label{nnthres}
S1. Alice prepares the states, each chosen at random from a pair of distance-$0$ orthogonal $n$-qubit GHZ states,
\bea\label{GHZ0}
\ket{GHZ}&=&\frac{1}{\sqrt{2}}\left[\ket{000.....00}+\ket{111.....11}\right],\nonumber\\
\ket{GHZ}_0&=&\frac{1}{\sqrt{2}}\left[\ket{000.....00}-\ket{111.....11}\right].
\eea

S3. In order to fix the choice of measurement for a selected run from $\{t_s\}_{s=1}^u$, Alice randomly chooses an $n$-tuple binary vector, $\mathcal{O}_{t_s}$, from $\{\mathcal{O}\}\equiv\{\mathcal{O}(0)=(0_10_20_30_4...0_n)\}\bigcup\{\mathcal{O}(ij)=(0_10_2...0_{i-1}1_i0_{i+1}...0_{j-1}1_j0_{j+1}...0_n);~\forall~ i\neq j\}$ and, for the $t_{p_i(s)}$-th run from the list $C_i=\{(\sigma_i(t_{p_i(s)}),\Pi_i(t_{p_i(s)}))\}_{s=1}^u$,
either selects observable $\sigma_i(t_{p_i(s)})=\sigma_x=\left(
\begin{array}{cc}
0 & 1 \\
1 & 0 \\
\end{array}
\right)$ or observable $\sigma_i(t_{p_i(s)})=\sigma_y=\left(
\begin{array}{cc}
0 & -i \\
i & 0 \\
\end{array}
\right)$, depending on whether the $i$-th bit value of the binary vector $\mathcal{O}_{t_s}$ is $0$ or $1$, respectively.
Here, $\mathcal{O}_{t_s}$ refers both to the binary vector and to its associated stabilizer. Alice now sends list $C_i$ to Bob$_i$, and Bob$_i$ measures his $\Pi_i(t_{p_i(s)})$-th qubit in the $\sigma_i(t_{p_i(s)})$ basis and sends the measurement outcome, $v_i(t_{p_i(s)})$, to Alice for all such $(\sigma_i(t_{p_i(s)}),\Pi_i(t_{p_i(s)}))\in C_i$.\\

The $\{\mathcal{O}\}$ are the stabilizers (or anti stabilizers \footnote{$\mathcal{O}$ is called an anti-stabilizer if $\mathcal{O}\ket{\Psi}=-\ket{\Psi}$}) of the pair of states given in (\ref{GHZ0}) and they satisfy the following eigenvalue relation,
\be
\mathcal{O}_t\ket{S(a,r(b_t))}=\lambda(a,t)\ket{S(a,r(b_t))},~\forall \mathcal{O}_t\in \{\mathcal{O}\},
\ee where $\lambda(a,t)$ is an eigenvalue with $\lambda(a,t) \in \{\pm 1\}$. Therefore, the product of all individual local outcomes $v_i(t)$ for the observable $\mathcal{O}_t$ must be equal to the corresponding eigenvalue for the state $\ket{S(a,r(b_s))}$ i.e., $\lambda(a,t)=\Pi_{i=1}^n v_i(t)$.\\

S4. For each selected run $t_s$ Alice checks whether the products of local results satisfies the corresponding eigenvalue equation or not:

\bean
\mathcal{O}_{t_s}\ket{GHZ}=\begin{cases}
+1\ket{GHZ}, & \mbox{if }\mathcal{O}_{t_s}=\mathcal{O}(0), \\
-1\ket{GHZ} & \mbox{otherwise},\end{cases}& \mbox{and}& \mathcal{O}_{t_s}\ket{GHZ}_0=\begin{cases}
-1\ket{GHZ}_0, & \mbox{if }\mathcal{O}_{t_s}=\mathcal{O}(0), \\
+1\ket{GHZ}_0 & \mbox{otherwise}.\end{cases}
\eean
This protocol is secure in two ways. Firstly, the eavesdropper does not have any information about the sequence of qubits, so she cannot create a measurement outcome to satisfy all the above relations. Secondly, the above relations hold specifically for unique states (up to local unitary equivalence), so the action of an eavesdropper at any stage will be detected as this uniqueness will be compromised. \\

Theorem \ref{dis-0} tell us that the pair, (\ref{GHZ0}), can only be distinguished if all parties cooperate, otherwise the secret key cannot be recovered.

\subsection{The restricted $(2,n)$ threshold QSS scheme}
S1. Alice prepares the states, each chosen at random from a pair of distance-$r$ orthogonal $n$-qubit GHZ states, as given in Eq. (\ref{GHZr}).\\

S3. Same as step S3 described in section \ref{nnthres}.\\

S4. The product of local results of each measurement satisfies the following eigenvalue relations,

\bean
\mathcal{O}_{t_s}\ket{GHZ}=\begin{cases}
+1\ket{GHZ}, \mbox{ if }\mathcal{O}_{t_s}=\mathcal{O}(0), \\
-1\ket{GHZ} \mbox{ otherwise},\end{cases}& \mbox{and}& \mathcal{O}_{t_s}\ket{GHZ}_r=\begin{cases}
+1\ket{GHZ}_r, \mbox{ if }\mathcal{O}_{t_s}=\mathcal{O}(ij) &\mbox{with $i,j\leq r$},~\forall ~i\neq j, \\
&\mbox{~or, $i,j\geq r$},~\forall ~i\neq j, \\
-1\ket{GHZ}_r \mbox{ otherwise}.&
\end{cases}
\eean
Here, also, the above relations uniquely define the states and hence the protocol is secure in the same two ways as described in section \ref{nnthres}.

According to theorem \ref{dis-r}, any two cooperating parties, one from the first $r$ parties and the other from the remaining $n-r$ parties, can distinguish the pair (\ref{GHZr}) perfectly. Hence they are also able to recover the key.

\subsection{The $(k,n)$ threshold QSS scheme}

S1. Alice prepares the states, each chosen at random from a pair of distance-$r(>0)$ orthogonal $n$-qubit Dicke states, as given in Eq. (\ref{DSp}), with $c_j = c$ and $c'_i = c'$ both fixed real constants.
%$c_j=1=c'_i$ for all $j,i$.
Note that we already discussed the $(n,n)$-threshold QSS scheme in section \ref{nnthres}, so here we consider the case when $k<n$.\\

S3. %Chose the measurement from the union of the stabilizers of the pair (\ref{DSp}) \cite{CLW11,CLSW10,LW11}.\\
%(I am working on it to find proper stabilizers)\\
All $n$-qubit Dicke states $\ket{m,n}$ given in Eq. (\ref{DS}) are eigenstates of $\sigma^{\otimes n}_z$ with eigenvalue $(-1)^m$, where
$\sigma_z=\left(
\begin{array}{cc}
1 & 0 \\
0 & -1 \\
\end{array}
\right)$. For even $n$, the Dicke state $\ket{\frac{n}{2},n}$ is also an eigenstate of $\sigma^{\otimes n}_x$ and $\sigma^{\otimes n}_y$ with eigenvalue $1$ \cite{CLSW10}. Therefore, if $n$ is even Alice chooses (for more security) the pair of states such that the pair contains the state $\ket{\frac{n}{2},n}$. \\
%$\mathcal{B}=|\Pi(\sigma_x\sigma_x\sigma_z......\sigma_z)+ \Pi(\sigma_y\sigma_y\sigma_z......\sigma_z)|$

To detect an eavesdropper for a $(k,n)$-threshold QSS scheme with even $n$, Alice chooses (randomly) the same measurement settings [{\em i.e.} $\sigma_1(t_{p_1(s)})=\sigma_2(t_{p_2(s)})=....=\sigma_n(t_{p_n(s)})$] from $\{\sigma_x,\sigma_y,\sigma_z\}$ for all Bob$_i$s' for the run $t_s$. Whereas for odd $n$ Alice prepares $L$, $(n+1)$-qubit states $\ket{S(a,b_t)}$ instead of n-qubit states. She keeps one qubit from each of $\ket{S(a,b_t)}$ and behaves like player Bob$_{(n+1)}$ from step S2 to S4. But she does not take any part to reveal the key/secret {\em i.e.}, in step S5.\\%  all measurement settings are $\sigma_z$. \\

% asks the players to perform one of the following two actions.
%First, Alice can ask all the players to perform the same measurement chosen at random from
%$\{\sigma_x,\sigma_y,\sigma_z\}$ on their respective qubits, where each qubit is chosen according to the associated run sequence. Second, Alice can ask all the players to perform measurement $\sigma_z$ on their respective qubits, where each qubit is chosen according to the associated run sequence. In either case, the measurement outcomes are secretly communicated to Alice.

S4. If there is no eavesdropping then the product of all local measurement results must satisfy the following eigenvalue relations,
\bean
\sigma_z\sigma_z...\sigma_z\ket{m,n}&=&(-1)^m\ket{m,n}\\
\sigma_x\sigma_x...\sigma_x\ket{\frac{n}{2},n}&=&\ket{\frac{n}{2},n}\text{ [when $n$ is even]}\\
\sigma_y\sigma_y...\sigma_y\ket{\frac{n}{2},n}&=&\ket{\frac{n}{2},n} \text{ [when $n$ is even]}.\\
\eean
According to Theorem \ref{thdic}, to distinguish the pair (\ref{DSp}) perfectly, cooperation between $k=n-r+1$ or more parties is necessary and, thus, the secret key is only revealed if $k\geq\lceil\frac{n}{2}\rceil$ or more parties cooperate.

\section{Conclusion}

In this paper we explore various interesting cases of quantum secret sharing (QSS) schemes based on local distinguishability of orthogonal multipartite entangled states. To reconstruct the secret, a group of single qubit quantum operations are sufficient for our schemes. Therefore our schemes are cost-efficient as well as quite competent compared to other existing QSS schemes, where several multiparty quantum operations are required to reveal the secret. Multipartite joint quantum operations are fairly expensive when the individual parties are in different places and the quantum cost increases exponentially as the number of parties increase. We also demonstrate an unusual $(2,n)$-threshold scheme where the set of players are partitioned into two disjoint groups and where the key can be recovered if any two players cooperate, on condition that the two players do not belong to the same group. This setup is quite non-standard for a QSS scheme and we hope that this result will encourage researchers to develop the field further.

\section{Acknowledgments}
We thank Marek \.{Z}ukowski for stimulating discussions. This project was started when RR was a doctoral student at UiB with the financial support by Norwegian Research Council. RR also acknowledges partial support
by Foundation for Polish Science (FNP) TEAM/2011-8/9 project co-financed by EU European Regional Development Fund, and ERC grant QOLAPS (291348).

\end{document}